\newtheorem{definition}{Definition}
\newtheorem{problem}{Problem}
\newtheorem{assumption}{Assumption}
\newtheorem{theorem}{Theorem}
\title{\LARGE \bf
Decentralized Signal Temporal Logic Control for Perturbed Interconnected Systems via Assume-Guarantee Contract Optimization
}
\author{Kasra Ghasemi, Sadra Sadraddini, and Calin Belta
\thanks{This work was partially supported by the NSF under grants IIS-2024606 and IIS-1723995.}
\thanks{K. Ghasemi and C. Belta are with the Division of System Engineering,
        Boston University, Boston, MA 02215, USA
        {\tt\small kasra0gh@bu.edu, cbelta@bu.edu}}%
\thanks{S. Sadraddini is with the Computer Science and Artificial Intelligence Laboratory, Massachusetts Institute of Technology, Cambridge, MA 02139, USA
        {\tt\small sadra@mit.edu}}%
}
\begin{document}

\maketitle
\thispagestyle{empty}
\pagestyle{empty}

\begin{abstract}
We develop a novel decentralized control method for a network of perturbed linear systems with dynamical couplings subject to Signal Temporal Logic (STL) specifications. 
We first transform the STL requirements into set containment problems and then we develop controllers to solve these problems. Our approach is based on treating the couplings between subsystems as disturbances, which are bounded sets that the subsystems negotiate in the form of parametric assume-guarantee contracts. The set containment requirements and parameterized contracts are added to the subsystems' constraints. We introduce a centralized optimization problem to derive the contracts, reachability tubes, and decentralized closed-loop control laws. We show that, when the STL formula is separable with respect to the subsystems, the centralized optimization problem can be solved in a distributed way, which scales to large systems. We present formal theoretical guarantees on robustness of STL satisfaction. The effectiveness of the proposed method is demonstrated via a power network case study.
\end{abstract}

\section{INTRODUCTION}

Multi agent systems benefit from decentralized control laws that  require only local information. Online computations lessen as each system implements its own control law. Furthermore, communication issues are also mitigated as agents do not need to constantly share information. Multi agent control drew a lot of interest in control theory, particularly in the last two decades \cite{olfati2007consensus,cortes2004coverage, egerstedt2001formation}. For multi agent systems with hard constraints and uncertainties, rigorous mathematical tools are required to reason about the closed-loop performance of the aggregate system. 

Formal methods provide mathematical guarantees for the behavior of control systems. Formal languages, such as temporal logics \cite{baier2008principles}, can be used to describe system specifications. With particular relevance to this work, Signal Temporal Logic (STL) \cite{maler2004monitoring} can describe a broad range of temporally bounded constraints. For example, the STL formula $\psi={\textbf F}_{[0,10]} (x > 10) \vee {\textbf G}_{[5,20]} (x < 0) $ reads  in plain English ``\emph{eventually} between time 0 and 10 the value of $x$ exceeds 10 {or} \emph{always} between 5 and 20 the value of $x$ remains below 0". 
The use of formal methods in multi-agent systems has also been investigated \cite{pant2018fly,liu2017distributed,lindemann2019control}. But, with only one exception \cite{lindemann2019control}, they only studied dynamically decoupled agents, and none of them took into account the presence of additive disturbances.
A related approach in formal methods is based on set-valued dynamics. Analyzing such systems enables characterizing all the possible responses in the presence of bounded uncertainties. Reachability analysis and correct-by-design control synthesis, which guarantee correctness without the need for system testing, received a lot of attention in recent years \cite{girard2005reachability,majumdar2017funnel, dutta2019reachability}. 

Formal methods come with a high computational cost, which makes it challenging to apply them to multi agent systems. That is especially true when we are considering systems with disturbances, and want to guarantee the satisfaction of temporal logic specification under all allowed disturbances.  Divide and conquer techniques are a natural way to break the problem into smaller pieces. They can be applied to interconnected systems, where the dynamics of the agents are coupled. Assume-guarantee contracts \cite{chatterjee2007assume} formalize the promises that systems make and provide over dynamical couplings. For instance, assume-guarantee contracts were used to describe vehicular flow between neighborhoods of a traffic network \cite{kim2015compositional}, aircraft power distributions, \cite{oh2019optimizing}, and dynamics of an aerial robot tethered to a ground one \cite{nilsson2016synthesis}.  

In this paper, we study the problem of decentralized control design for interconnected perturbed linear systems subject to STL constraints. Unlike approaches that assume that feasible assume-guarantee contracts are given a-priori \cite{nuzzo2015platform, chen2019compositional}, we parameterize the contracts and search for feasibility. Unlike the search methods in \cite{kim2015compositional,lin2020decentralized}, our parameterization, which is based on our prior work \cite{ghasemi2020compositional}, has a special convexity property that leads to a tractable solution. The approach in \cite{nilsson2016synthesis} also parameterized contracts and found them using convex optimization, but was limited to polytopic invariant sets. Here we include complex, non-convex STL constraints, and retain the parameterization from \cite{ghasemi2020compositional}. The main contributions of this paper are as follows:
\begin{enumerate}
    \item By fixing the ``logical behavior" through solving a mixed-integer program, we are able to convert the STL specifications into set containment problems. Then, a linear program is proposed to jointly optimize assume-guarantee contracts, set-valued trajectories, and decentralized closed loop control laws. This allows steering the aggregate system in a way that the global STL formulae is satisfied, while disturbances are rejected in a decentralized manner. The resulting bounds are computed using assume-guarantee contracts and are connected to the STL robustness score, a signed distance to satisfaction.
    \item  When the given STL formula is separable with respect to the subsystems, we provide a method to make the contribution above computationally more tractable for large networks by making it compositional. We use the convexity properties in \cite{ghasemi2020compositional} to optimize contracts, reachability sets, and controllers in a distributed way.  
\end{enumerate}

The rest of the paper is organized as follows. We first provide the notation and the necessary background in Section II. We state the problem in Section III. The solution is provided in Sections IV and V. Finally, an illustrative example is shown in Section \ref{example}. 
\section{Notations and Preliminaries}

\subsection{Notation}
$\mathbb{R}$, $\mathbb{R}_+$ and $\mathbb{N}$ stand for the sets of real, non-negative real, and non-negative integers, respectively;  $\mathbb{N}_h$ represents the set of non-negative numbers up to $h\in \mathbb{N}$. An $h$-dimensional box is defined as $\mathbb{B}_h:=\{b\in \mathbb{R}^h| ||b||_\infty \leq 1\}$. $\mathbb{S}_1 \oplus \mathbb{S}_2 := \{s_1 + s_2 | s_1 \in \mathbb{S}_1 , s_2 \in \mathbb{S}_2 \}$ is the Minkowski sum of two sets $\mathbb{S}_1$ and $\mathbb{S}_2$. 
The \textit{Directed Hausdorff distance} $d_{DH}(\mathbb{S}_1,\mathbb{S}_2)$ is a quantitative measure of how far $\mathbb{S}_2$ is from being a subset of $\mathbb{S}_1 $, and it can be computed as:
\begin{equation} \label{directed_huasdorff_distance_def}
    d_{DH}(\mathbb{S}_1,\mathbb{S}_2) :=  \sup_{s_2\in \mathbb{S}_2} \inf_{s_1\in \mathbb{S}_1} d(s_1,s_2) ,
\end{equation}
where $d: \mathbb{R}^n \times \mathbb{R}^n \rightarrow \mathbb{R}_+$ is a metric. For compact sets, $d_{DH}(\mathbb{S}_1,\mathbb{S}_2)=0$ if and only if $\mathbb{S}_2 \subseteq \mathbb{S}_1$. The Cartesian product of sets $\mathbb{S}_1$ and $\mathbb{S}_2$ is denoted by $\mathbb{S}_1 \times \mathbb{S}_2$ and the Cartesian product of $\mathbb{S}_1, \cdots, \mathbb{S}_N$ by $\prod_{i=1}^N \mathbb{S}_i$. $I_n$, $0_n$, and $[A_1,A_2]$ represent the $n \times n$ identity matrix, the $n$-dimensional zero vector, and the horizontal concatenation of matrices $A_1$, $A_2$ with the same number of rows, respectively.

\subsection{Zonotopes}
A zonotope is a symmetric shape set representation defined as $\mathcal{Z}(c,G) := \{ c + Gb | \forall b \in \mathbb{B}_q \}$,
where $c \in \mathbb{R}^n$ and $G\in \mathbb{R}^{n \times q}$ $(n,q \in \mathbb{N})$ denote the zonotope's center and generator, respectively.
The order of the zonotope is equal to $\dfrac{q}{n}$. Zonotopes are convenient for set calculations, such as Minkowski sums and linear transformations. Given two sets $\mathbb{S}_1 = \mathcal{Z}(c_1,G_1)$ and $\mathbb{S}_2 = \mathcal{Z}(c_2,G_2)$, a matrix $A \in \mathbb{R}^{m \times n}$, and a vector $b\in \mathbb{R}^n$, where $c_1,c_2 \in \mathbb{R}^{n}$ and $G_1\in \mathbb{R}^{n\times q_1}$ , $G_2\in \mathbb{R}^{n\times q_2}$, we have $\mathbb{S}_1 \oplus \mathbb{S}_2 = \mathcal{Z}(c_1 + c_2 , [G_1,G_2])$ and $A\mathbb{S}_1 + b = \mathcal{Z}(Ac_1+b , AG_1)$.
\subsection{Specifications}
Signal Temporal Logic (STL) was introduced in \cite{maler2004monitoring} to specify Boolean and temporal properties of real-valued, time signals. A discrete-time signal is a function $s : \mathbb{N} \rightarrow \mathbb{R}^{q}$. We use $(s,[t_1,t_2])$ to denote the sequence $s(t_1) , ... , s(t_2)$ and $(s,t)$ for $(s,[t,\infty])$. An STL formula is defined with the following recursive grammar:
\begin{equation}
    \varphi ::= \pi | \neg \varphi | \varphi \wedge \psi | \varphi | \varphi \vee \psi | \textbf{F}_{[t_1,t_2]} \varphi | \textbf{G}_{[t_1,t_2]} \varphi | \varphi \textbf{U}_{[t_1,t_2]} \psi
\end{equation}
where $\pi$ is a predicate. 
All predicates are assumed to be linear in the form $ p(s) \leq c$ or $ p(s) \geq c$, with $c$ being a scalar and $p: \mathbb{R}^{q} \rightarrow \mathbb{R}$ being a linear function. Symbols $\neg$ , $\wedge$ , and $\vee$ denote Boolean negation, conjunction, and disjunction, respectively; $\textbf{F}_{[t_1,t_2]}$, $\textbf{G}_{[t_1,t_2]}$, and $\textbf{U}_{[t_1,t_2]}$ are temporal operators for ``eventually",``always", and ``until", respectively.
Also, $(s,t) \models \varphi$ denotes that signal $s$ satisfies formula $\varphi$ at time $t$, and 
$(s,t) \nvDash \varphi$ if this is not the case. 
\begin{definition}
The satisfaction of a formula by a signal $s$ at time $t$ is defined as follows:
\begin{itemize}
    \item $(s,t) \models (p(s) \geq c) \Leftrightarrow p(s(t)) \geq c$ , 
    \item $(s,t) \models (p(s) \leq c) \Leftrightarrow p(s(t)) \leq c$ , 
    \item $(s,t) \models \neg \varphi  \Leftrightarrow (s,t) \nvDash \varphi$ ,
    \item $(s,t) \models \varphi_1 \wedge \varphi_2 \Leftrightarrow (s,t) \models \varphi_1 \wedge (s,t) \models \varphi_2 $ ,
    \item $(s,t) \models \varphi_1 \vee \varphi_2 \Leftrightarrow (s,t) \models \varphi_1 \vee (s,t) \models \varphi_2 $ ,
    \item $(s,t) \models \textbf{G}_{[t_1,t_2]} \varphi  \Leftrightarrow \forall t' \in [t_1 , t_2] , (s,t) \models \varphi $ ,
    \item $(s,t) \models \textbf{F}_{[t_1,t_2]} \varphi  \Leftrightarrow \exists t' \in [t_1 , t_2] , (s,t') \models \varphi $ ,
    \item $(s,t) \models \varphi_1 \textbf{U}_{[t_1,t_2]} \varphi_2 \Leftrightarrow \exists t' \in [t_1 , t_2] , (s,t') \models \varphi_2 \wedge \forall t^{\prime\prime} \in [t_1 , t^{\prime\prime}](s,t^{\prime\prime}) \models \varphi_1 $.
\end{itemize}
\end{definition}
For simplicity, $(s,0)\models \varphi$ is denoted by $s\models \varphi$. The {\em horizon} of a formula is the shortest amount of time required to determine whether a formula $\varphi$ is satisfied, and it is denoted by $hrz(\varphi)$ \cite{Belta2018}. The {\em robustness} \cite{donze2010robust} of an STL formula with respect to a signal determines how strongly the signal satisfies / violates the formula. Robustness is a real function that produces a {\em score}, where larger scores mean stronger satisfaction. The robustness of the formula $\varphi$ with respect to the signal $s$ at time $t$ is denoted by $\rho(s,\varphi,t)$ and can be computed recursively.
The procedure begins with the predicates, where each predicate's robustness is defined as $ \rho(s , (p(s) \geq c) , t) =   \rho = p(s)-c $.
Without loss of generality, we only consider negation free formulas in this paper. This is not restrictive, as any STL formula can be made negation-free. It is also worth noting that, while predicates with inequalities are used in the semantics definition, strict inequalities and equalities can be formed using the Boolean operators.



\section{Problem Definition and Approach}
Consider the following network of coupled time-variant linear subsystems:
\begin{multline} \label{network_sys}
    x_i(t+1) =  A_{ii}(t)x_i(t) + B_{ii}(t)u_i(t) + \sum_{ j \neq i}A_{ij}(t)x_j(t) \\+ \sum_{ j \neq i}B_{ij}(t)u_j(t) + w_i(t),\;i\in\mathcal{I}, 
\end{multline}
where $\mathcal{I}$ is an index set for the subsystems; $A_{ii}(t) \in \mathbb{R}^{n_i \times n_i}$, $A_{ij}(t) \in \mathbb{R}^{n_i \times n_j}$, $B_{ii}(t) \in \mathbb{R}^{n_i \times m_i}$, and $B_{ij}(t) \in \mathbb{R}^{n_i \times m_j}$ are given, time-variant matrices for subsystem $i$. Let $\eta=|\mathcal{I}|$ denote the number of subsystems in the network. The state, control input, and disturbance for subsystem $i$ at time step $t$ are represented by $x_i(t) \in \mathbb{R}^{n_i}$, $u_i(t) \in \mathbb{R}^{m_i} $, and $w_i(t) \in \mathbb{R}^{n_i} $, which are bounded by given polytopic sets $x_i(t)\in X_i(t) \subseteq \mathbb{R}^{n_i}$, $u_i(t)\in U_i(t)\subseteq \mathbb{R}^{m_i}$, and $w_i(t) \in W_i(t) \subset \mathbb{R}^{n_i}$, respectively.
A {\em decentralized controller} $\mu_i(. , t) : X_i(t) \rightarrow U_i(t)$ is a function that maps the current state of subsystem $i$ into a control input in the control space of the same subsystem. System (\ref{network_sys}) with no disturbances is called a {\em nominal system}. 
\begin{definition}[Decentralized Finite-Time Viable Sets] \label{def:viable} Given $h \in \mathbb{N}$, the sequences of sets $\Omega_i(0),\Omega_i(1), ... , \Omega_i(h)$, $i \in \mathcal{I}$ for the interconnected system in \eqref{network_sys} are called decentralized \emph{viable} sets, if for all $t \in \mathbb{N}_h,\forall i \in \mathcal{I}$, $\Omega_i(t) \subseteq X_i(t)$ and there exists a set of policies $\mu_i(.,t)$ such that $\Theta_i(t)   \subseteq U_i(t)$ and $\forall t \in \mathbb{N}_{h-1}, \forall x_i(t)\in \Omega_i(t), \forall w_i(t)\in W_i(t) \Rightarrow x_i(t+1)\in \Omega_i(t+1)$, where $
\Theta_i(t):=\mu(\Omega_i(t),t)$ is called {\em action set}.
\end{definition}

A signal $s: \mathbb{N} \rightarrow X \times U \subset \mathbb{R}^{n+m}$ is a trajectory where $s(t)$ represents a vector stacking the state and control of the aggregated system at time step $t$, which is represented by $s(t) = (x(t) , u(t))$, where $x(t) = [x^T_1(t) , \cdots , x^T_{\eta}(t)]^T \in \mathbb{R}^n$ and $u(t) = [u^T_1(t) , \cdots , u^T_{\eta}(t)]^T \in \mathbb{R}^m$ and $n = \sum_{i\in \mathcal{I}}n_i$ and $m = \sum_{i\in \mathcal{I}}m_i$.


In this paper, we consider the following problem:

\vspace{2mm}

\begin{problem} \label{problem_1}
Given a network of perturbed linear systems in the form \eqref{network_sys}, the initial states $x_i^{initial}(0) \in X_i(0),\forall i \in \mathcal{I}$, a bounded STL formula $\varphi$ with linear predicates in the states and / or controls, and a quadratic cost $J : \mathbb{S} \rightarrow \mathbb{R}_+$, find the optimal decentralized controllers $\mu_i(x_i(t) , t), \forall i\in \mathcal{I}$ and their corresponding sequence of viable sets $\Omega_i(t)$ such that $J$ is minimized, $x_i(t)\in X_i(t)$, $u_i(t) \in U_i(t)$, and $s \models \varphi$. If such a signal does not exist, find $\Omega_i(t)$ corresponding to the maximum possible value of the robustness, i.e, find a signal with the least amount of violation.
\end{problem}

\vspace{2mm}

To solve Problem \ref{problem_1},  a two-step optimization-based approach is proposed. We begin by solving a mixed-integer program for the aggregated nominal system, which is constrained by the STL formula $\varphi$ \cite{Belta2018}. It allows us to determine active predicates at each time and convert the STL formula satisfaction into a set containment problems, which is shown to be a convex programming problem \cite{sadraddini2019linear}. In the second step, we take into account the additive disturbance, along with the set containment constraints, and we find a set of decentralized closed-loop controllers and viable sets. The technical details are explained in the next sections.

\section{Converting STL formulas into Set Containment Problems}

In this section, the method from \cite{Belta2018} is used to encode an STL formula into a mixed-integer linear program. Then, the set of predicates whose satisfaction corresponds to the maximum robustness for the nominal system is identified and transformed to a set containment problem.
\subsection{Encoding the STL Formulas}



Following the predicate-based encoding from \cite{Belta2018}, a binary variable $z_t^\pi \in \{0,1\}$ is dedicated to each predicate $\pi = (y \geq 0)$, which must be assigned to $1$ if the predicate is true, and to 0 otherwise. The relation between $z_t^\pi$, the robustness $\rho$, and $y_t$ is encoded as
\begin{equation}\label{big_M_constraint}
y_t + M( 1 - z_t^\pi ) \geq \rho \quad,\quad y_t - M z_t^\pi < \rho ,
\end{equation}
where $M$ is a sufficiently large number such that for all time steps, $M \geq \text{max\color{red} }y_i , i \in \mathbb{N}_{n_y}$. The equations in \eqref{big_M_constraint} enforce the binary variable $z_t^\pi$ to be equal to $1$ when $y_t \geq \rho$ and equal to $0$ when $y_t < \rho$. 
Disjunctions and conjunctions are captured by the following constraints:
\begin{equation} \label{z_combination_constraints}
    z = \bigwedge_{i=0}^{n_z}z_i \Rightarrow z \leq z_i , i \in \mathbb{N}_{n_z} , z = \bigvee_{i=0}^{n_z}z_i \Rightarrow z \leq \sum_{i=0}^{n_z}z_i,
\end{equation}
where $n_z \in \mathbb{N}$ and $z\in [0,1]$ is declared as a continuous variable. However, as the above equation shows, it can only take binary values. In \cite{karaman2008optimal}, \cite{raman2014model} upper-bounding constraints are added to create a necessary and sufficient condition:
\begin{subequations}\label{milp_encoding_stl}
\begin{equation}
    z = \bigwedge_{i=0}^{n_z}z_i \Leftrightarrow z \geq \sum_{i=0}^{n_z} z_i - n_z +1 , z \leq z_i , i \in \mathbb{N}_{n_z} 
\end{equation} 
\begin{equation}
    z = \bigvee_{i=0}^{n_z}z_i \Leftrightarrow z \geq z_i, i \in \mathbb{N}_{n_z}, z \leq \sum_{i=0}^{n_z}z_i.
\end{equation}
\end{subequations}
The upper-bound constraints are necessary when the specification does not include negation. $z_t^\varphi \in [0,1]$ is the variable that indicates whether $(s,t) \models \varphi$. A recursive translation of an STL formula is as follows:
\begin{multline} \label{phi_combination_constraints}
    \varphi = \bigwedge_{i=1}^{n_\varphi} \varphi_i \Rightarrow z_t^\varphi = \bigwedge_{i=1}^{n_\varphi}z_k^{\varphi_i}   ;  \varphi = \bigvee_{i=1}^{n_\varphi} \varphi_i \Rightarrow z_t^\varphi =  \bigvee_{i=1}^{n_\varphi} z_t^{\varphi_i} ; \\
    \varphi = G_I \psi \Rightarrow z_t^{\varphi} = \bigwedge_{t'\in I} z_{t'}^\psi ; \varphi = F_I \psi \Rightarrow z_t^\varphi = \bigvee_{t' \in I} z_{t'}^\psi ; \\
    \varphi = \psi_1 U_I \psi_2 \Rightarrow z_t^\varphi = \bigvee_{t' \in I}(z_{t'}^{\psi_2} \wedge \bigwedge_{t'' \in [t,t']} z_{t''}^{\psi_1}),
\end{multline}
where $n_\varphi \in \mathbb{N}$. Given a formula $\varphi$, the set of constraints recursively constructed by equations \eqref{big_M_constraint}, \eqref{milp_encoding_stl}, and \eqref{phi_combination_constraints} is denoted by $\mathcal{C}_\varphi$.
\begin{theorem} [Adapted from \cite{Belta2018}]
The following properties hold for the above mixed-integer linear program encoding:(i) $(s,t) \models \varphi$, if adding $z_t^\varphi=1$ and $\rho \geq 0$ to the constraints makes $\mathcal{C}_\varphi$ feasible, (ii) $(s,t) \nvDash \varphi$, if adding $z_t^{\varphi} = 1$ and $\rho \geq 0$ makes $\mathcal{C}_\varphi$ infeasible, (iii) the largest $\rho$ such that $z_t^\varphi = 1$ and $\mathcal{C}_\varphi$ is feasible is equal to the robustness.
\end{theorem}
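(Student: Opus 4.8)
The plan is to prove the three statements simultaneously by a single structural induction on the syntax tree of $\varphi$, establishing one unifying invariant from which (i)--(iii) drop out as corollaries. The invariant I would carry through the induction is: for every subformula $\psi$, every time index $t'$, and every value assigned to the threshold variable $\rho$, the constraint set $\mathcal{C}_\psi$ admits an assignment of its binary and continuous variables consistent with $z_{t'}^\psi = 1$ \emph{if and only if} $\rho(s,\psi,t') \geq \rho$. Equivalently, the set of thresholds $\rho$ for which $z_{t'}^\psi = 1$ is feasible is exactly the closed half-line $(-\infty,\rho(s,\psi,t')]$. Once this is in hand, statement (iii) is immediate, since the supremum of that half-line is attained and equals $\rho(s,\varphi,t)$; statements (i) and (ii) then follow by intersecting the half-line with $\rho \geq 0$ and invoking the soundness of the robustness semantics, namely that $\rho(s,\varphi,t)\geq 0$ corresponds to $(s,t)\models\varphi$ under the $\geq$-predicate convention adopted here.

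The base case treats a predicate $\pi=(y\geq 0)$, whose robustness is $\rho(s,\pi,t')=y_{t'}$. Reading off the big-$M$ relations \eqref{big_M_constraint}, fixing $z_{t'}^\pi=1$ forces $y_{t'}\geq\rho$ (the second, strict constraint becoming vacuous for large $M$), while $z_{t'}^\pi=0$ forces $y_{t'}<\rho$; hence $z_{t'}^\pi=1$ is feasible exactly when $y_{t'}\geq\rho$, which is the invariant. For the inductive step I would split on the outermost operator of $\psi$ via the recursive encoding \eqref{phi_combination_constraints}. For conjunction, the exactness constraints \eqref{milp_encoding_stl} force the parent $z=\bigwedge_i z_i$ to equal $1$ iff every $z_i=1$; combined with the inductive hypothesis this makes $z_t^\psi=1$ feasible iff $\rho(s,\psi_i,t)\geq\rho$ for all $i$, i.e. iff $\min_i \rho(s,\psi_i,t)=\rho(s,\psi,t)\geq\rho$. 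Disjunction is dual and yields the $\max$. Because $\textbf{G}_I$ and $\textbf{F}_I$ are encoded as conjunctions and disjunctions over the window $I$, and $\textbf{U}_I$ as a disjunction of conjunctions, their cases reduce to the two just treated and reproduce exactly the $\min$/$\max$-over-time structure of the robustness recursion.

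The main obstacle is the ``only if'' direction at the inductive step: I must not merely propagate constraints downward but actually \emph{exhibit} a globally consistent assignment of all variables of $\mathcal{C}_\psi$ realizing $z_t^\psi=1$ whenever $\rho(s,\psi,t)\geq\rho$. This is precisely where the necessary-and-sufficient, upper-bounded form \eqref{milp_encoding_stl} is indispensable, rather than the one-sided form \eqref{z_combination_constraints}: without the upper bounds a feasible parent variable would fail to certify its children, and the equivalence would collapse to a one-directional implication. A second delicate point is the boundary $\rho=0$ together with the strict inequality in the second relation of \eqref{big_M_constraint}; I would verify that the half-line characterization remains closed on the right, so that the supremum in (iii) is genuinely attained, and that it is compatible with the standard convention that nonnegative robustness corresponds to satisfaction. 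That compatibility is exactly what lets me pass from (iii) to (i), and, by contraposition (absence of any feasible $\rho\geq 0$ forces $\rho(s,\varphi,t)<0$), to (ii).
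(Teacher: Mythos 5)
Your argument is correct and is essentially the standard proof that the paper imports by citation: the paper itself offers no proof of this theorem, deferring entirely to \cite{Belta2018}, and the structural induction you give --- the half-line invariant $\{\rho : z_{t'}^\psi=1 \text{ feasible}\}=(-\infty,\rho(s,\psi,t')]$ propagated through the exact $\min$/$\max$ encodings of \eqref{milp_encoding_stl} --- is precisely how that reference establishes soundness, completeness, and exactness of $\rho$. Your observations that the two-sided constraints in \eqref{milp_encoding_stl} (rather than \eqref{z_combination_constraints}) and the negation-free assumption are what make part (iii) hold are exactly the caveats the paper itself flags after the theorem statement.
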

It is shown in \cite{Belta2018} that when the STL formulas are negation free, $\rho$ equals robustness. As a result, it can be used as an objective function to maximize robustness.
\subsection{Set Containment for STL Formula Satisfaction} \label{G_phi}
The objective of this subsection is to get the set of $z_t^{\pi}$s equal to $1$, which are called active predicates, for the maximum robustness while considering the nominal system. If the disturbance bound is small enough, it can be assumed that the perturbed and nominal systems have the same set of active predicates, and a closed-loop controller can be found to ensure that the system's reachability set still satisfies those predicates.
We can do the synthesis for the aggregate nominal system \cite{Belta2018} rewritten as $x(t+1) = A(t) x(t) + B u(t)$ from \eqref{network_sys}, by using the STL satisfaction constraints introduced before:
\begin{equation}\label{centralized_synthesis_nominal}
\begin{aligned} 
&\max_{ x(t) ,u(t), z_t^\pi , \rho   } \quad  -J(s[0,hrz(\varphi)]) +  \mathcal{M}( |\rho| - \rho) \\
\textrm{s.t.} \quad & x(t+1) =  A(t) x(t) + B u(t), t\in \mathbb{N}_{hrz(\varphi)-1}\\
    & x(0) = [x^{initial}_1 , ... , x^{initial}_{\eta}] ,  \\
    & \mathcal{C}_\varphi , z_0^\varphi = 1.
\end{aligned}
\end{equation}
As long as robustness is positive, the proposed objective function minimizes the user defined cost function $J(.)$, which can be a regular quadratic function in the form of $\sum_{t=0}^{hrz(\varphi)} x(t)^TQx(t) + \sum_{t=0}^{hrz(\varphi)} u(t)^TRu(t)$. Otherwise, it maximizes robustness due to the effect of the large scalar $\mathcal{M}$ and finds the nominal trajectory with the least violation.

Each active predicate is actually a set, $y_t \geq \rho ,\forall y_t$, which must hold for all possible signals at time $t$.
By definition, we have $s(t) \in \prod_i \Omega_i(t) \times \prod_i \Theta_i(t)$. Assuming the set $\prod_i \Omega_i(t) \times \prod_i \Theta_i(t)$ is represented by a zonotopic set $\mathcal{Z}(c,G)$ (notation $t$ is removed for readability), then any possible signal must satisfy $\mathrm{e} \geq \rho ,\quad  \forall \mathrm{e} \in \mathcal{Z}(p( c ) , p(G))$.
Also, by definition, the zonotope $\mathcal{Z}(c,G)$ has the following upper and lower bounds $c - \sum_i|g_i| \leq \mathcal{Z}(c,G) \leq c + \sum_i|g_i|$, where $g_i$ is the $i$th column of $G$. Using these bounds, the satisfaction constraint for an active predicate would be:
\begin{equation} \label{true_predicate}
    - p( c ) +  \sum_i |p(G)_i|  \leq -\rho
\end{equation}
where $p(G)_i$ is the $i$th element of $p(G)$. 
\begin{theorem}
The constraint in \eqref{true_predicate} can be written as a set of linear constraints as follows:
\begin{equation}\label{true_predicate_linear}
    - p( c ) + \sum_i p'_i \leq -\rho , p'_i \geq p(G)_i , p'_i \geq  - p(G)_i.
\end{equation}
\end{theorem}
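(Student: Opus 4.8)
The plan is to treat \eqref{true_predicate} as a constraint containing the single nonsmooth term $\sum_i |p(G)_i|$ and to linearize it through the standard epigraph representation of the absolute value. The one conceptual step is the elementary identity $|a| = \max\{a,-a\}$, equivalently the fact that for a scalar variable $p'$ the pair of inequalities $p' \geq a$ and $p' \geq -a$ holds jointly if and only if $p' \geq |a|$. Applying this coordinate-wise with $a = p(G)_i$ shows that the last two inequality families of \eqref{true_predicate_linear}, namely $p'_i \geq p(G)_i$ and $p'_i \geq -p(G)_i$, are together equivalent to $p'_i \geq |p(G)_i|$ for each $i$.

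With this in hand I would establish the equivalence of \eqref{true_predicate} and \eqref{true_predicate_linear} as an existential statement over the auxiliary variables $p'_i$, proving both directions. For the forward direction, assume \eqref{true_predicate} holds and simply set $p'_i := |p(G)_i|$; these values satisfy $p'_i \geq p(G)_i$ and $p'_i \geq -p(G)_i$ by construction, and substituting them reproduces $-p(c) + \sum_i p'_i = -p(c) + \sum_i |p(G)_i| \leq -\rho$, so \eqref{true_predicate_linear} is feasible. For the backward direction, take any $p'_i$ feasible for \eqref{true_predicate_linear}; by the identity above each $p'_i \geq |p(G)_i|$, hence $\sum_i p'_i \geq \sum_i |p(G)_i|$, and chaining this with the first inequality of \eqref{true_predicate_linear} gives $-p(c) + \sum_i |p(G)_i| \leq -p(c) + \sum_i p'_i \leq -\rho$, which is exactly \eqref{true_predicate}.

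Finally I would remark that, although the equivalence above is stated existentially, no genuine slack is introduced in the optimization: since $\rho$ is being maximized in \eqref{centralized_synthesis_nominal} and the auxiliary variables appear only through $+\sum_i p'_i$ on the binding side, at any optimum the $p'_i$ are driven down to their lower envelope $|p(G)_i|$, so the reformulation is tight. I do not expect a real obstacle here; the only point requiring care is verifying that the direction of the inequality is preserved when passing from the per-coordinate bound $p'_i \geq |p(G)_i|$ to the summed bound and then into \eqref{true_predicate}, which follows from monotonicity of addition. The epigraph representation of $|\cdot|$ is thus the single idea that carries the entire argument.
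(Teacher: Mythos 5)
Your proof is correct and follows essentially the same route as the paper: both argue via the epigraph representation $p'_i \geq |p(G)_i|$ and establish the two directions of feasibility equivalence (the paper phrases the forward direction as ``\eqref{true_predicate_linear} is a relaxation, so infeasibility of the auxiliary variables implies infeasibility of the original''). Your version is simply more explicit about the witness $p'_i = |p(G)_i|$ and the tightness at the optimum.
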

\begin{proof}
It can be easily seen that if such $p'_i$s exist, the following relation holds:
\begin{equation}
        - p( c ) +  \sum_i |p(G)_i| \leq - p( c ) + \sum_i p'_i \leq -\rho ,    
\end{equation}
which also satisfies the original constraint \eqref{true_predicate}. Also, because \eqref{true_predicate_linear} is the relaxed form of the original problem, if such $p'_i$s do not exist, the original problem is also infeasible.
\end{proof}
Finally, the set of linear constraints that guarantees any possible trajectories in viable and action sets satisfies the STL formula $\varphi$ is denoted by $\mathcal{G}_\varphi$.

\section{Computation of Viable Sets under Additive Disturbance}
The original problem has been transformed into a decentralized control synthesis problem with zonotopic set containment constraints. The latter problem was considered in \cite{ghasemi2020compositional}, where a compositional approach using assume-guarantee contracts is proposed. In this section, we give a brief overview of \cite{ghasemi2020compositional} and incorporate the linear constraints $\mathcal{G}_\varphi$ into its formulation.

\subsection{Decentralized Synthesis}
First, the subsystems are decoupled from each other by considering the effects of other subsystems as disturbances, and by making some assumptions on the operational sets of each subsystem, as follows:
\begin{equation} \label{subsystems}
    x_i(t+1) = A_{ii}(t)x_i(t) + B_{ii}(t)u_i(t) + w_i^{aug}(t),
\end{equation}
where $w_i^{aug}(t)$ is the augmented disturbance set on subsystem $i$, which belongs to:
\begin{equation} \label{disturbance_AG}
    w_i^{aug}(t) \in \bigoplus_{j \ne i} A_{ij}(t) \mathcal{X}_j(t) \oplus \bigoplus_{j \ne i} B_{ij}(t) \mathcal{U}_j(t) \oplus W_i(t) ,
\end{equation}
where $\mathcal{X}_j(t)$ and $\mathcal{U}_j(t)$ are assumed operational sets for the state and the control input of subsystem $j\in \mathcal{I}$. It can be seen that the performance of each subsystem affects the assumptions of the other subsystems. This give and take contracts are called assume-guarantee contracts.
\begin{definition}[Assume-Guarantee Contracts]
An assume-guarantee contract for subsystem $i \in \mathcal{I}$ is a pair $\mathcal{C}_i = (\mathcal{A}_i , \mathcal{G}_i)$, where:
\begin{itemize}
    \item The assumption $\mathcal{A}_i$ is the assumption set over the disturbance $w_i^{aug}(t) \in \mathcal{W}_i(t)$,
    \item The guarantee $\mathcal{G}_i$ is the promise of subsystem $i$ over its state and control input $x_i(t) \in \mathcal{X}_i(t) , u_i(t) \in \mathcal{U}_i(t)$.
\end{itemize}
As seen in \eqref{disturbance_AG}, the following relation holds between the guarantee of other subsystems $\mathcal{X}_j, \mathcal{U}_j , j \ne i$ and the assumption of subsystem $i$, $\mathcal{A}_i$:
\begin{equation} \label{assumption_guarantee relation}
     \mathcal{W}_i(t)= \bigoplus_{j \ne i} A_{ij}(t) \mathcal{X}_j(t) \oplus \bigoplus_{j \ne i} B_{ij}(t) \mathcal{U}_j(t) \oplus W_i(t)
\end{equation}
\end{definition}
The above zonotopic set is represented by $ W_i(t) = \mathcal{Z}( d^{w}_i(t) , G^{w}_i(t) ) $, where $d^{w}_i(t) \in \mathbb{R}^{n_i}$ and $G^{w}_i(t) \in \mathbb{R}^{n_i \times l(t)}$. Next, we define a parametric assume-guarantee contract, which is similar to the regular contract except that the sets $\mathcal{X}_i(t)$, $\mathcal{U}_i(t)$ are replaced with the parametric sets below:
\begin{subequations} \label{parameterization}
\begin{equation}
    \mathcal{X}_i(t , \alpha_i^x(t)) := \mathcal{Z}( c_i^x(t), G_i^x \text{Diag}(\alpha_i^x(t)) ) ,
\end{equation}
\begin{equation}
    \mathcal{U}_i(t , \alpha_i^u(t)) := \mathcal{Z}( c_i^u(t), G_i^u \text{Diag}(\alpha_i^u(t)) ) ,
\end{equation}
\end{subequations}
where $G_i^x \in \mathbb{R}^{n_i \times f_i}$ and $G_i^u \in \mathbb{R}^{m_i \times g_i}$ ($f_i,g_i\in \mathbb{N}$) are given matrices defined by the user, and the vectors $c_i^x(t) \in \mathbb{R}^{n_i}$ , $\alpha_i^x(t) \in \mathbb{R}^{f_i}$, $c_i^u(t) \in \mathbb{R}^{m_i}$, and $\alpha_i^u(t) \in \mathbb{R}^{g_i}$ are parameters. Also, the parametric assumption set $\mathcal{W}_i(t , \alpha^{ext} )$ is derived by replacing the above parametric sets into equation \eqref{assumption_guarantee relation}, where $\alpha^{ext}$ denotes the set of all parameters.
To deal with the mismatch between the assumed and real operational disturbance sets, we introduce the notion of correctness:
\begin{definition}[Correctness]
A set of parametric contracts $\mathcal{C}_i$ is correct if 
\begin{equation} \label{correctnesss_critoria}
 \mathcal{W}_i(t) \subseteq \bigoplus_{j \ne i} A_{ij}(t) \Omega_j(t) \oplus B_{ij}(t) \Theta_j(t) \oplus W_i(t), \forall i,t.  
\end{equation}
\end{definition}
The preceding definition is required to resolve the circularity problem of assumption-guarantee contracts. It was shown in \cite{ghasemi2020compositional} that the following sufficient constraints imply \eqref{correctnesss_critoria}:
\begin{equation}
    \mathcal{X}_i(t , \alpha_i^x(t)) \subseteq \Omega_i(t) , \mathcal{U}_i(t , \alpha_i^u(t)) \subseteq \Theta_i(t). 
\end{equation}

The next step is to design a robust controller for each subsystem. The following decentralized controller structure is proposed for each subsystem:
\begin{equation}\label{controller}
    x_i(t) = \bar{x}^i_t + T^i_t \zeta ,  u_i(t) = \bar{u}^i_t + M^i_t \zeta , \zeta \in \mathbb{B}_k ,
\end{equation}
where $\bar{x}^i_t \in \mathbb{R}^{n_i}$, $\bar{u}^i_t \in \mathbb{R}^{m_i}$, $T^i_t \in \mathbb{R}^{n_i \times q_i(t)}$, and $M^i_t \in \mathbb{R}^{m_i \times q_i(t)}$ are unknowns that need to be tuned and $q(t) = k + \sum_{i=0}^{i=t}{l(i)}$, where $k \in \mathbb{N}$ is a hyper-parameter. 
Then, for subsystem $i$, it can be shown that the following linear constraints are sufficient for tuning the control parameters:
\begin{subequations} \label{viable_constraints}
\begin{equation} \label{viable_constraint_set}
    [ A_{ii}(t)T^i_t + B_{ii}(t)M^i_t , G^{aug}_i(t) ] = [T^i_{t+1}], t\in \mathbb{N}_{h-1}
\end{equation}
\begin{equation} \label{viable_constraint_center}
    A_{ii}(t) \bar{x}^i_t + B_{ii}(t) \bar{u}^i_t + d^{aug}_i(t) = \bar{x}^i_{t+1}, t\in \mathbb{N}_{h-1}.
\end{equation}
\end{subequations}
If such parameters exist, $\Omega_i(t) = \mathcal{Z}( \bar{x}^i_t , T^i_t ) $ is the viable set, $\Theta_i(t) = \mathcal{Z}( \bar{u}^i_t , M^i_t ) $ is the action set, and \eqref{controller} is the controller. Intuitively, constraints \eqref{viable_constraint_set} and \eqref{viable_constraint_center} are set containment constraints; \eqref{viable_constraint_center} adjusts the centers of the viable sets and \eqref{viable_constraint_set} takes care of the set expansion at each step, such that all the possible trajectories are contained within the tube $\Omega_i(0) , \Omega_i(1) , \cdots , \Omega_i(h)$, where $h$ is the horizon, which is set to $hrz(\varphi)$ in our problem.
Additionally, the following constraints are proposed to impose hard constraints:
\begin{equation}\label{hard_constraint_impose}
     \mathcal{Z}( \bar{x}^i_t , T^i_t ) \subseteq X_i(t), \mathcal{Z}( \bar{u}^i_t , M^i_t ) \subseteq U_i(t), t\in \mathbb{N}_{h}.
\end{equation}

It was demonstrated in \cite{sadraddini2019linear} that zonotope and polytope containment problems can be encoded into linear constraints.
Thus, all of the suggested constraints \eqref{correctnesss_critoria}, \eqref{viable_constraints}, \eqref{hard_constraint_impose}, and $\mathcal{G}_\varphi$ for all subsystems and time steps may be merged to build a centralized linear program to solve Problem \ref{problem_1}.
The objective function is ad-hoc, but we recommend the mean square error between the center line of viable/action sets and the nominal trajectory/controllers generated by \eqref{centralized_synthesis_nominal}.


\subsection{Compositional Computation of Decentralized Viable Sets}

Despite the fact that the centralized solution presented at the end of the preceding subsection is a linear program, it still suffers from curse of dimensionality in high dimensions. Nevertheless, it is demonstrated in \cite{ghasemi2020compositional} that the suggested parameterization \eqref{parameterization} allows for compositional computation of viable sets in a time-efficient manner by transforming a single, large linear program into a group of smaller linear programs. We show that if the STL formula in Problem \ref{problem_1} is separable by subsystems, we can also use the parameterization to solve the same centralized approach in the previous section in a compositional manner. Additionally, convergence is ensured due to the convexity of the problem set.
\begin{assumption}
The STL formula in Problem \ref{problem_1} is separable by the subsystems, meaning it can take the form $\varphi = \varphi_1 \wedge ... \wedge \varphi_\eta$, where $i$ is the subsystem's index.
\end{assumption}
In \cite{ghasemi2020compositional}, we proposed a parametric potential function that quantifies how far a set of contracts is from correctness. This comes in contrast to the previously introduced correctness property, which was either true or false. The larger the parametric potential function, the farther the set of contracts is from correctness, so the goal is to minimize the proposed potential function.
Here, the parametric potential function is modified by including the containment constraints coming from the STL formulas, as well as adding the sum of the directed Hausdorff distances between the hard constraints and the viable/action sets in \eqref{hard_constraint_impose} into the potential function.
\begin{definition}[Parametric potential function]
The parametric potential function $\mathcal{V}(\alpha^{ext})$ is defined as $\mathcal{V}(\alpha^{ext}) = \sum_{i\in \mathcal{I}} \mathcal{V}_i(\alpha^{ext})$, where
\begin{multline}\label{V_i_dh}
    \mathcal{V}_i(\alpha^{ext}) := \\ \sum_{t\in \mathbb{N}_{hrz(\varphi)}} [ d_{DH}(\mathcal{X}_i(t) , \Omega_i(t) )  + d_{DH}(\mathcal{U}_i(t) , \Theta_i(t) ) \\+ d_{DH}(X_i(t) , \Omega_i(t) ) + d_{DH}(U_i(t) , \Theta_i(t) ) ].
\end{multline}
\end{definition}
Using the technique explained in Subsection \ref{G_phi}, the satisfaction of the STL formula $\varphi_i$ for subsystem $i$ can be encoded as a set of linear constraints denoted by $\mathcal{G}_{\varphi_i}$. Each component of the parametric potential function $\mathcal{V}_i(\alpha^{ext})$ can be computed using these constraints and \eqref{viable_constraints} by solving the following linear program:
\begin{subequations} \label{V_i}
\begin{align}
    & \mathcal{V}_i(\alpha) = \min_{\underset{, d_t^x , d_t^u , \bar{d}_t^x , \bar{d}_t^u}{\mathrm{x}^i,T^i,\mathrm{u}^i,M^i } }
    \begin{aligned}[t]
       & \sum_{t \in \mathbb{N}_{hrz(\varphi)}} [d^x_t + \bar{d}^x_t] +  \sum_{t \in \mathbb{N}_{hrz(\varphi)-1}} [d^u_t + \bar{d}^u_t]
    \end{aligned} \notag \\
    &\text{subject to} \notag \\
    & [A_{ii}(t)T^i_t+ B_{ii}(t) M^i_t , G_i^{w}(t) ] =  [T^i_{t+1}],  \forall t \in \mathbb{N}_{hrz(\varphi)-1} \label{eq:a} \\
    & A_{ii}(t)\bar{\mathrm{x}}^i_t + B_{ii}(t) \bar{\mathrm{u}}^i_t + d_i^{w}(t) = \bar{\mathrm{x}}^i_{t+1},   \forall t \in \mathbb{N}_{hrz(\varphi)-1} \label{eq:b}\\
    & \mathcal{Z}(\bar{\mathrm{x}}^i_t,T^i_t) \subseteq \ \mathcal{X}_i(t,\alpha^x_i(t)) \oplus \mathcal{Z}(0,d^x_t I_{n_i}) ,  \forall t \in \mathbb{N}_{hrz(\varphi)} \label{eq:c}\\
    & \mathcal{Z}(\bar{\mathrm{u}}^i_t,M^i_t) \subseteq \ \mathcal{U}_i(t,\alpha^u_i(t)) \oplus \mathcal{Z}(0,d^u_t I_{m_i}) ,  \forall t \in \mathbb{N}_{hrz(\varphi)-1} \label{eq:d}\\
    & \mathcal{Z}(\bar{\mathrm{x}}^i_t,T^i_t) \subseteq X_i(t) \oplus \mathcal{Z}(0,\bar{d}^x_t I_{n_i}) ,  \forall t \in \mathbb{N}_{hrz(\varphi)} \label{eq:e}\\
    & \mathcal{Z}(\bar{\mathrm{u}}^i_t,M^i_t) \subseteq U_i(t) \oplus \mathcal{Z}(0,\bar{d}^u_t I_{m_i}) , \forall t \in \mathbb{N}_{hrz(\varphi)-1} \label{eq:f}\\
    & \mathcal{G}_{\varphi_i} , \bar{\mathrm{x}}^i_0 = x_i^{initial}(0)  \label{eq:g}\\
    & d_t^x , \bar{d}_t^x\geq 0 , \hspace{2 mm} \forall t \in \mathbb{N}_{hrz(\varphi)}, \label{eq:hausdorffdistance_x}\\
    & d_t^u, \bar{d}_t^u \geq 0 , \hspace{2 mm} \forall t \in \mathbb{N}_{hrz(\varphi)-1}. \label{eq:hausdorffdistance_u}
\end{align}
\end{subequations}
Constraints \eqref{eq:a} and \eqref{eq:b} originate from \eqref{viable_constraints}. Also, the STL satisfaction constraints and the initial state constraint are added in \eqref{eq:g}.
The remaining constraints along with the objective function compute an over-approximation for the summation of the directed Hausdorff distance between sets $\Omega_i(t)$ and $\mathcal{X}_i(t) / X_i(t)$ and $\Theta_i(t)$ and $\mathcal{U}_i(t)$/$U_i(t)$ over all time steps. This approach of computing the Directed Hausdorff distance is inspired from \cite{sadraddini2019linear}.
\begin{theorem}[Convexity of the potential function]
The potential function proposed above is convex with respect to the parameters. The set of acceptable parameters (correct and valid) is also a convex set.
\end{theorem}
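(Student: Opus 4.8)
The plan is to prove the two claims in turn, reducing both to the joint convexity of the feasible region of the linear program \eqref{V_i} in the pair consisting of its internal decision variables and the parameter vector $\alpha^{ext}$. For the first claim, since $\mathcal{V}(\alpha^{ext}) = \sum_{i \in \mathcal{I}} \mathcal{V}_i(\alpha^{ext})$ and a finite sum of convex functions is convex, it suffices to show that each $\mathcal{V}_i$ is convex in $\alpha^{ext}$. I would write $\mathcal{V}_i$ in the abstract form of a parametric program $\mathcal{V}_i(\alpha^{ext}) = \min_{\xi} \{ \ell^\top \xi : (\xi, \alpha^{ext}) \in \mathcal{F}_i \}$, where $\xi$ stacks all inner variables $\mathrm{x}^i, T^i, \mathrm{u}^i, M^i, d_t^x, d_t^u, \bar{d}_t^x, \bar{d}_t^u$ (together with the auxiliary containment-encoding variables introduced below), and $\mathcal{F}_i$ is the joint feasible set. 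The objective is linear in $\xi$ and does not involve $\alpha^{ext}$, hence it is trivially jointly convex. The standard partial-minimization principle then gives the result: the value function obtained by minimizing a jointly convex function over one block of a jointly convex set is convex in the remaining block. Thus everything hinges on establishing that $\mathcal{F}_i$ is convex in $(\xi, \alpha^{ext})$.

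Checking the constraints one by one, the dynamics-type equalities \eqref{eq:a}--\eqref{eq:b} stay affine once one observes that their data $G_i^{w}(t), d_i^{w}(t)$ are the generator and center of $\mathcal{W}_i(t,\alpha^{ext})$, which by \eqref{assumption_guarantee relation} and the zonotope Minkowski-sum/linear-map rules are \emph{affine} in $\alpha^{ext}$ (the parametric generators $G_j^x\mathrm{Diag}(\alpha_j^x(t))$ depend linearly on $\alpha_j^x(t)$). Together with the initial-state and STL constraints \eqref{eq:g} (recall that $\mathcal{G}_{\varphi_i}$ is a system of linear constraints by Subsection~\ref{G_phi}) and the slack nonnegativities \eqref{eq:hausdorffdistance_x}--\eqref{eq:hausdorffdistance_u}, all of these are affine in $(\xi,\alpha^{ext})$ and hence define convex sets. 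The crux --- and what I expect to be the main obstacle --- is the family of zonotope-containment constraints \eqref{eq:c}--\eqref{eq:f}, because $\alpha^{ext}$ enters through the generator matrices $G_i^x \mathrm{Diag}(\alpha_i^x(t))$ and $G_i^u \mathrm{Diag}(\alpha_i^u(t))$ of the parametric sets \eqref{parameterization}. A naive encoding of ``$\mathcal{Z}(\bar{\mathrm{x}}^i_t, T^i_t) \subseteq \mathcal{Z}(c, G_i^x \mathrm{Diag}(\alpha_i^x(t)))$'' via the certificate of \cite{sadraddini2019linear} would require a matrix $\Gamma$ with $T^i_t = G_i^x \mathrm{Diag}(\alpha_i^x(t)) \Gamma$, which is bilinear in $(\alpha, \Gamma)$ and hence nonconvex.

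The key step that resolves this is the reformulation underlying the parameterization of \cite{ghasemi2020compositional}: since the templates $G_i^x, G_i^u$ are fixed and only the diagonal scaling varies, I would keep the template generator constant in the containment encoding and push $\alpha$ into the norm bound. Concretely, containment of $\mathcal{Z}(\bar{\mathrm{x}}^i_t, T^i_t)$ in $\mathcal{X}_i(t,\alpha_i^x(t)) \oplus \mathcal{Z}(0, d_t^x I_{n_i})$ is certified by auxiliary variables $\Gamma, \beta$ with the equalities $T^i_t = [G_i^x, I_{n_i}]\Gamma$ and $\bar{\mathrm{x}}^i_t - c = [G_i^x, I_{n_i}]\beta$, together with row-wise bounds $\sum_j |\Gamma_{kj}| + |\beta_k| \le (\alpha_i^x(t))_k$ for the rows tied to $G_i^x$ and $\sum_j |\Gamma_{kj}| + |\beta_k| \le d_t^x$ for the rows tied to the slack block. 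The equalities are now affine (the template is constant), and each bound is a sum of absolute values upper-bounded by a single linear term in $\alpha$ or $d_t^x$, which is jointly convex and admits the linear epigraph reformulation already used in the paper. The same argument applies verbatim to \eqref{eq:d}--\eqref{eq:f}. This shows $\mathcal{F}_i$ is convex and completes the first claim.

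For the second claim, I would note that the set of acceptable parameters is exactly the projection onto the $\alpha^{ext}$-coordinates of the joint feasible set obtained by imposing the exact (zero-slack) versions of \eqref{correctnesss_critoria}, \eqref{viable_constraints}, \eqref{hard_constraint_impose}, and $\mathcal{G}_\varphi$; by the same constraint-by-constraint argument this joint set is convex, and the projection of a convex set is convex. Equivalently, since $\mathcal{V} \ge 0$ by construction and $\mathcal{V}(\alpha^{ext}) = 0$ precisely when a correct and valid contract set exists, the acceptable set equals $\{\alpha^{ext} : \mathcal{V}(\alpha^{ext}) \le 0\}$, a sublevel set of the convex function $\mathcal{V}$, hence convex. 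Either route closes the proof.
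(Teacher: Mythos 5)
Your proof is correct, and its skeleton is the same as the paper's: each $\mathcal{V}_i$ is the value function of a linear program whose feasible set is jointly convex in the inner variables and $\alpha^{ext}$, so partial minimization gives convexity; a finite sum of convex functions is convex; and the set of acceptable parameters is the zero sublevel set of a nonnegative convex function, hence convex. Where you go beyond the paper is in making the crucial step explicit: the paper's proof simply asserts that ``each component of the potential function is a linear program'' and leaves it to the reader (and to the cited prior work \cite{ghasemi2020compositional}, \cite{sadraddini2019linear}) to check that $\alpha^{ext}$ enters the constraints of \eqref{V_i} in a convexity-preserving way. You correctly identify that the naive containment certificate for \eqref{eq:c}--\eqref{eq:f} would produce a bilinear term $G_i^x\,\mathrm{Diag}(\alpha_i^x(t))\,\Gamma$, and your fix --- absorbing the diagonal scaling into the certificate variables so that the fixed template $[G_i^x, I_{n_i}]$ appears in the equalities while $\alpha$ migrates to the right-hand side of the row-wise absolute-value bounds --- is exactly the reformulation that makes the parameterization \eqref{parameterization} work; it yields constraints of the form (sum of absolute values) $\leq$ (linear in $\alpha$), which are jointly convex and linearly encodable. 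In short, your argument buys a self-contained verification of the claim the paper's one-line proof takes on faith, at the cost of some length; the two are otherwise the same proof. One minor point in your favor: the acceptable set is a \emph{sublevel} set $\{\alpha^{ext} : \mathcal{V}(\alpha^{ext}) \leq 0\}$ (equivalently the zero sublevel set, since $\mathcal{V}\geq 0$), which is the precise statement of the fact the paper loosely calls ``the level set of a convex function is a convex set.''
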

\begin{proof}
As seen in \eqref{V_i}, each component of the potential function is a linear program, which makes $\mathcal{V}_i(\alpha^{ext})$ a convex and piecewise affine function (a sum of convex functions is convex). Also, it is a well-known fact that the level set of a convex function is a convex set, thus, the set of acceptable parameters, which is equal to the zero level set of the potential function, is also a convex set. 
\end{proof}
The idea is to minimize the potential function using gradient descent and iteratively update the parameters:
\begin{equation} \label{GD}
    \alpha^{ext} \leftarrow \alpha^{ext} -  \sum_{i\in \mathcal{I}} \nabla_{\alpha^{ext}} \mathcal{V}_i(\alpha^{ext}),
\end{equation}
Convergence to the global minimum is guaranteed because the proposed potential function is convex. Each subsystem can find the direction that is best for it ($\nabla_{\alpha^{ext}}\mathcal{V}_i(\alpha^{ext})$), using its own local information and the common knowledge parameters, breaking the problem down into many smaller linear programs. If the minimum of the potential function is zero (by definition, the potential function is always larger than zero), it indicates that both the set of derived parametric contracts are correct and the viable and actions sets are within hard constraints. Thus, the desired control policies and viable sets are determined. Also, the nominal trajectories and controllers derived from \eqref{centralized_synthesis_nominal} can be used as initial values for the center parameters in our parameterized sets to give the gradient descent a warm start.
\section{Case Study}
\label{example}
We apply the method developed in this paper to address the load-frequency problem in power networks \cite{lfc}. A network is made up of several areas, each with its own power generator and demands, and some of them can be connected to each other to interchange power as needed, depending on the network architecture. Each area's state is represented by a 2-dimensional vector $[\delta_i(t), f_i(t)]^T$, where $\delta_i(t) \in \mathbb{R}$ is the deviation of the phase angle and $f_i(t) \in \mathbb{R}$ is the deviation of the frequency at time $t$ for area $i\in \mathbb{N}$. Also, $u_i(t)\in \mathbb{R}$ is the control input, which is the amount of change from its nominal value in the power generated by the generator at the area $i$ and time $t$. The dynamics for each area is given by:
\begin{multline}
    \dot{\delta}_i(t) = 2 \pi f_i(t) \hspace{2mm} , \hspace{2mm} \dot{f}_i(t) = - \dfrac{f_i(t)}{T_{p_i}} + \dfrac{K_{p_i} u_i(t) }{T_{p_i}} - \\   \dfrac{K_{p_i}}{2 \pi T_{p_i}} ( \sum_{j\in \mathcal{N}_i} K_{s_{ij}} [ \delta_i(t) - \delta_j(t) ] ) - \dfrac{K_{p_i} \omega_i(t)}{T_{p_i}},
\end{multline}
where $K_{p_i}, K_{s_{ij}} , T_{p_i}$ are the system gain, synchronizing coefficient between area $i$ and $j$, and system model time constant. In this case study, they are set to 110, 0.5, and 25, respectively, for all areas. Also, $\omega_i(t)$ is the load disturbance for area $i$ at time $t$, which is bounded by $|\omega_i(t)| \leq 0.001$. In addition, $\mathcal{N}_i$ denotes the neighbours of area $i$. Here, we consider the ring network architecture consisted of $20$ areas. Also, the control input is bounded by $|u_i(t)| \leq 0.1$. We use the Euler method to discretize the dynamics for every $0.1$ unit of time. For all areas, the initial state is  $[0.1 , 0.1 ]^T$ and the STL specification is $\varphi_i = \textbf{F}_{[0,6]}\textbf{G}_{[0,2]} \psi_1  \wedge \textbf{F}_{[0,8]} \psi_2$,
where $\psi_1 = [ \delta_i \leq 0.26] \wedge [\delta_i \geq 0.14] \wedge [f_i \leq -0.04 ] \wedge [f_i \geq -0.16 ]$ and $\psi_2 = [ \delta_i \leq 0.01] \wedge [\delta_i \geq -0.01] \wedge [f_i \leq 0.01 ] \wedge [f_i \geq -0.01 ]$. The goal is to synthesize decentralized controllers for each area subject to the specifications. 
We set the horizon to nine and synthesize the controllers using our approach. The baseline parametric sets are selected to be the viable and action sets generated from \eqref{viable_constraints} while couplings to other areas are ignored. The initial value of all parameters in the distributed algorithm is one. We used Gurobi on a MacBook Pro with 2.6 GHz 6-Core Intel Core i7 and 16 GB memory to run the algorithm. The results are shown in Fig.\ref{viable_fig}, and Fig.\ref{control_fig}. It can be seen that any possible trajectory that passes through the viable sets satisfies the STL specification and at the same time all the implemented controllers satisfy the hard constraint on the control input.

To demonstrate the approach's scalability, we experimented with various number of areas in the ring network and reported the running time in Fig \ref{fig_time}. The stated time period includes only the time spent on second step, but not on solving the MILP. That is because both distributed and centralized approaches share the first step. Additionally, to ensure that the solution exists for high-dimensional state spaces, we consider a large bound for the controller(i.e. $|u_i(t)| \leq 10$). As predicted, the distributed approach has a slower growth rate, making it more appropriate for the state spaces larger than $40$. Moreover, one of the primary benefits of the distributed technique is that it may be calculated in parallel. While we handled everything sequentially here, if multiprocessing is employed, the stated time could be reduced more depending on the number of cores used.
\begin{figure*}[h!] 
  \centering
  \begin{subfigure}[b]{0.34\linewidth}
    \includegraphics[width=\linewidth]{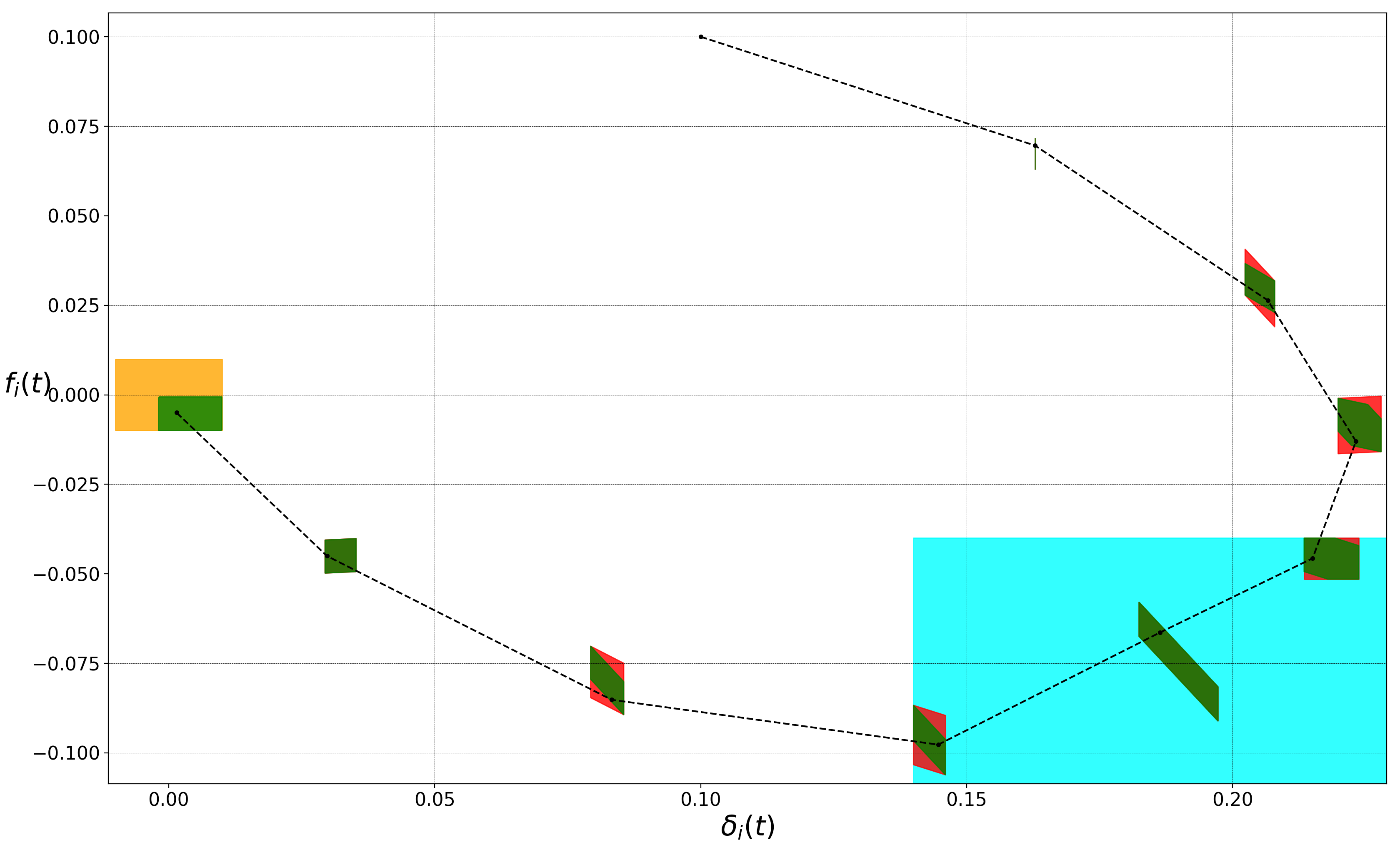}
    \caption{}
    \label{viable_fig}
  \end{subfigure}
  \begin{subfigure}[b]{0.34\linewidth}
    \includegraphics[width=\linewidth]{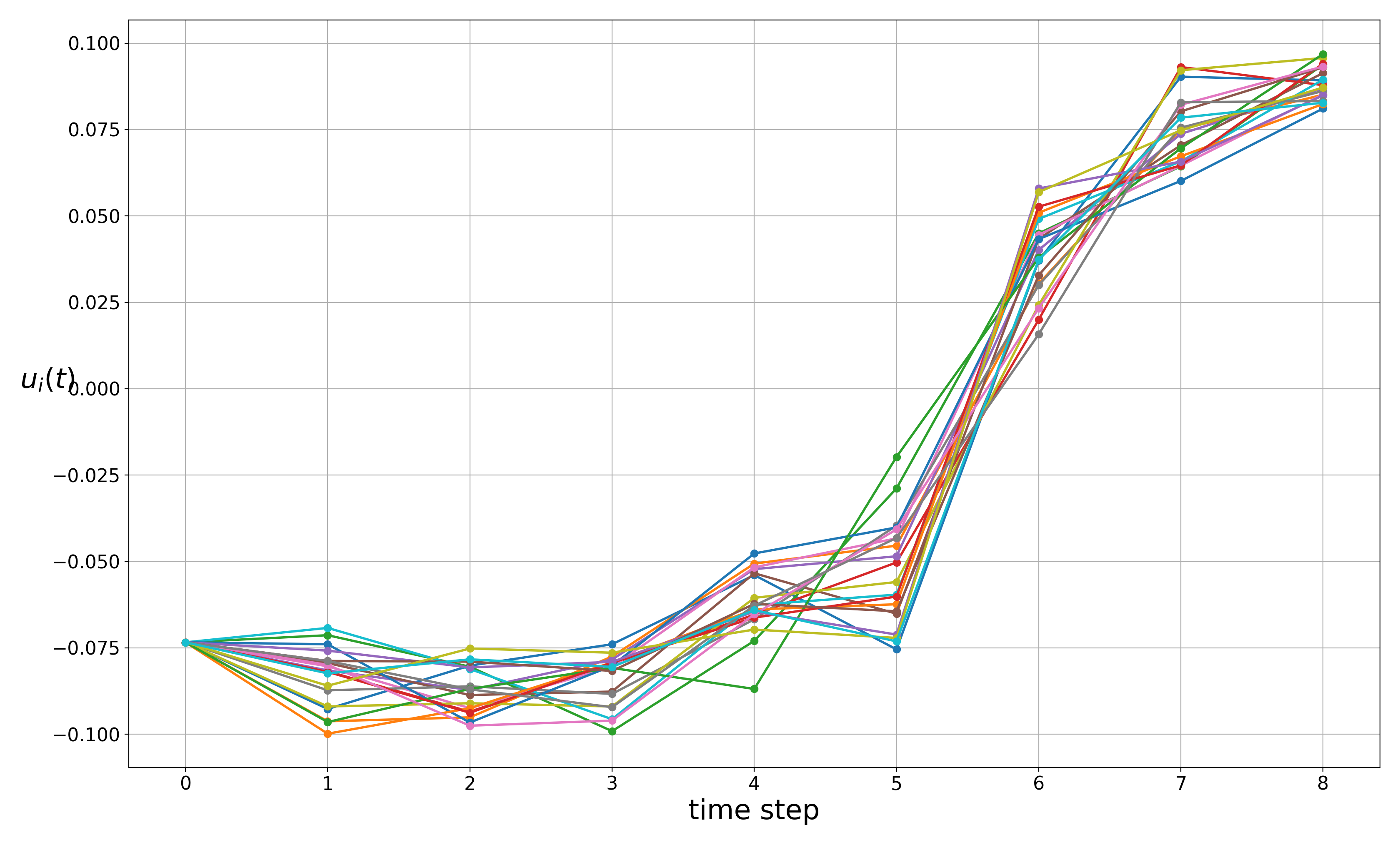}
    \caption{}
    \label{control_fig}
  \end{subfigure}
  \begin{subfigure}[b]{0.29\linewidth} 
    \includegraphics[width=\linewidth]{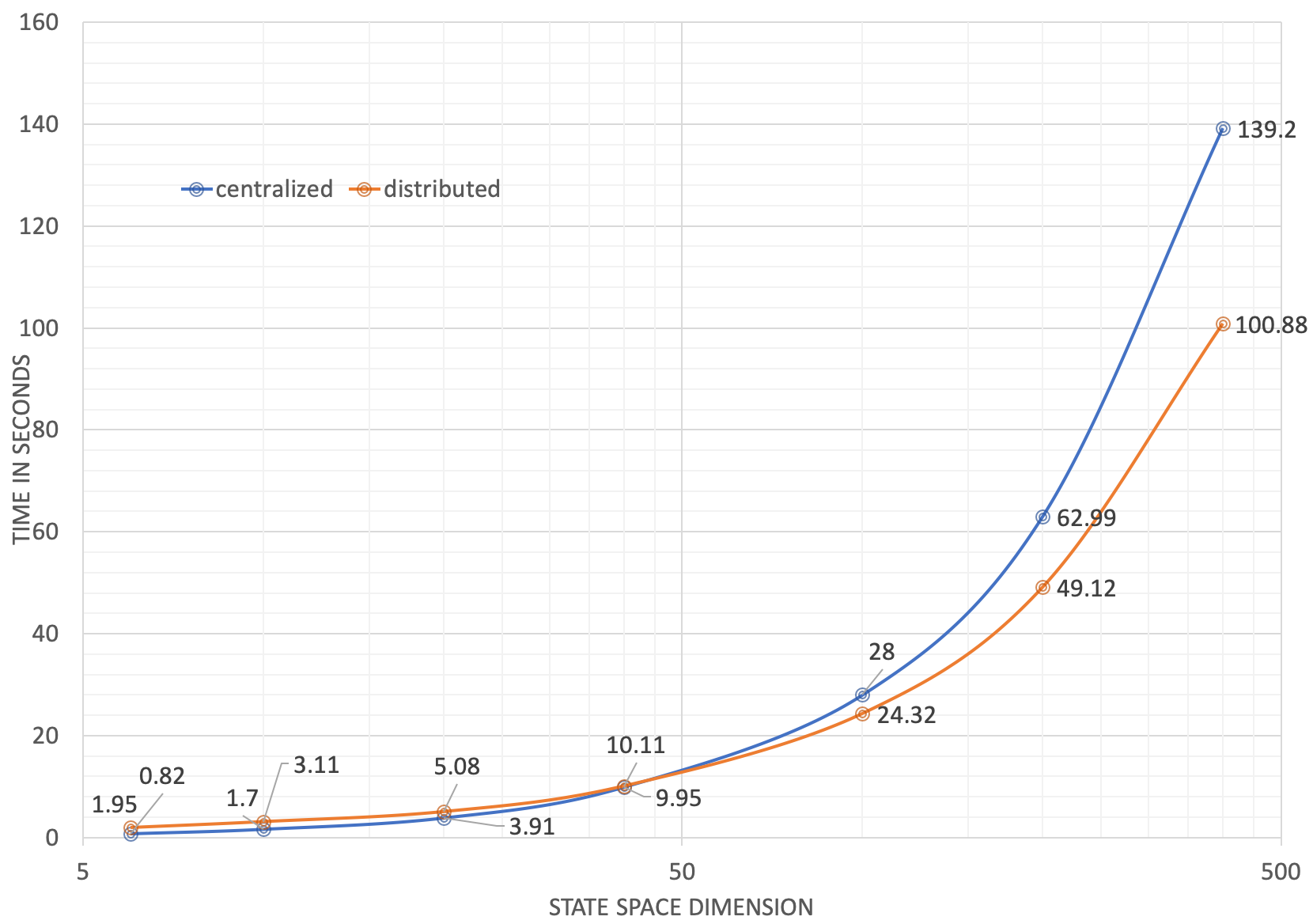}
    \caption{}
    \label{fig_time}
  \end{subfigure}
  \caption{(a) The green sets illustrate the viable sets for one of the areas in the case study. The blue and orange sets are the set of states satisfying $\psi_1$ and $\psi_2$, respectively. The red sets show the parameterized sets defined on the state space at different time steps for this specific area (some of them are tightly close to the viable sets and are not visible). The black line represents the trajectory traveled by this area. (b) Controllers for each area as time series. (c) Reported time in seconds for the distributed and centralized approach for different state space dimensions.}
  \label{fig}
\end{figure*}
\section{CONCLUSIONS}
Control synthesis subject to both a STL formula and a bounded disturbance is a computationally challenging problem. To overcome this challenge, we propose a solution which consists of two steps: First, we convert satisfaction of the STL formula into a set containment problem. To handle it, we consider the nominal system and use a centralized MILP. We claim that for small enough disturbances, both systems would have the same set of active predicates, which are seen as bounds. Second, we synthesize controllers subject to these bounds. Since the second step needs a set-based calculation, it has a relatively higher computational cost and thus creates a bottleneck for large scale systems. We show that this step can be achieved in a compositional fashion when the STL formula is separable by subsystems.

In the future, we will investigate the possibility of replacing the MILP in the first step, which remains a barrier due to its computational cost. Sampling methods are a promising direction. Additionally, we are considering employing a distributed control architecture instead of the decentralized architecture proposed here. 




\bibliographystyle{plain}        
\bibliography{references}           
                                 
\end{document}